\documentclass[letterpaper,10 pt,conference]{ieeeconf}

\IEEEoverridecommandlockouts                              

\usepackage{array}
\usepackage{bm}
\usepackage{url}
\usepackage{amsmath}
\usepackage{amssymb}
\usepackage[backend=biber]{biblatex}
\usepackage{todonotes}
\usepackage{glossaries}
\usepackage[capitalize]{cleveref}
\usepackage{pgfplots}
\usepackage{tikz}
\usepackage{algorithm2e}
\usepackage{booktabs,array}
\pgfplotsset{compat=1.13}
\newtheorem{theorem}{Theorem}
\newtheorem{definition}[theorem]{Definition}

\newtheorem{corollary}[theorem]{Corollary}
\newtheorem{remark}[theorem]{Remark}

\newcommand{\x}{\bm{x}}
\newcommand{\y}{\bm{y}}
\newcommand{\z}{\bm{z}}
\DeclareMathOperator{\Span}{span}
\DeclareMathOperator{\id}{id}
\DeclareMathOperator{\rank}{rank}

\newacronym{desy}{DESY}{Deutsches Elektronen-Synchrotron}
\newacronym{euxfel}{European XFEL}{European X-Ray Free-Electron Laser}
\newacronym{xfel}{XFEL}{X-Ray Free-Electron Laser}
\newacronym{lbsync}{LbSync}{laser-based optical synchronization}
\newacronym[longplural={linear matrix inequalities}]{lmi}{LMI}{linear matrix inequality}
\newacronym{lo}{LO}{local oscillator}
\newacronym{lti}{LTI}{linear time-invariant}
\newacronym{mlo}{MLO}{master laser oscillator}
\newacronym{mo}{MO}{master timing reference oscillator}
\newacronym{pi}{PI}{proportional-integral}
\newacronym{pll}{PLL}{phase-locked loop}
\newacronym{ppl}{PPL}{pump-probe laser}
\newacronym{rf}{RF}{radio-frequency}
\newacronym{rms}{RMS}{root-mean-square}
\newacronym{slo}{SLO}{subsidiary laser oscillator}
\newacronym{snr}{SNR}{signal-to-noise ratio}
\newacronym{vco}{VCO}{voltage controller oscillator}
\newacronym{inr}{INR}{improvement-to-noise ratio}
\newacronym{bo}{BO}{Bayesian optimization}
\newacronym[longplural={Gaussian processes}]{gp}{GP}{Gaussian process}
\newacronym{ei}{EI}{expected improvement}
\newacronym{rkhs}{RKHS}{reproducing kernel Hilbert space}
\newacronym{ucb}{UCB}{upper confidence bound}
\newacronym{lcb}{LCB}{lower confidence bound}
\newacronym{pdf}{PDF}{probability density function}
\newacronym{cdf}{CDF}{cumulative distribution function}
\newacronym{smgo}{SMGO-\(\delta\)}{set membership global optimization}
\newacronym{awgn}{AWGN}{additive white Gaussian noise}
\newacronym{psd}{PSD}{power spectral density}
\newacronym{lsu}{LSU}{link stabilization unit}
\newacronym{oxc}{OXC}{optical cross-correlator}
\newacronym{icm}{ICM}{intrinsic co-regionalization model}
\newacronym{lmc}{LMC}{linear model of co-regionalization}
\newacronym{mle}{MLE}{maximum likelihood estimation}
\newacronym{mcmc}{MCMC}{Markov chain Monte Carlo}
\newacronym{map}{MAP}{maximum a posteriori} 
\newacronym{lkj}{LKJ}{Lewandowski-Kurowicka-Joe}
\newacronym{ipm}{IPM}{interval-predictor model}
\newacronym{nvs}{n.v.s.}{normed vector space}
\newacronym{mpc}{MPC}{model predictive control}

\title{\LARGE \bf
Robust Nonlinear System Identification in Reproducing Kernel Hilbert Spaces via Scenario Optimization
}

\author{Jannis Lübsen$^{1}$ and Annika Eichler$^{1,2}$
\thanks{$^{1}$Institute of Control Systems, Hamburg University of Technology, Hamburg, Germany
        {\tt\small \{jannis.luebsen,annika.eichler\}@tuhh.de}}%
\thanks{$^{2}$Machine Beam Controls, Deutsches Elektronen-Synchrotron (DESY), Hamburg, Germany
        {\tt\small annika.eichler@desy.de}}%
}

\begin{document}

\maketitle
\thispagestyle{empty}
\pagestyle{empty}

\begin{abstract}
This paper proposes a method for constructing one-step prediction tubes for nonlinear systems using \acrlongpl{rkhs}. We approximate a bounded \gls{rkhs} hypothesis set by a finite-dimensional subspace using bounds based on $n$-widths and a greedy algorithm for basis reduction. For kernels whose native spaces are norm-equivalent to Sobolev spaces, we derive how the required basis size scales with kernel smoothness and input dimension. This finite-dimensional representation enables the use of convex scenario optimization to obtain violation guarantees for the learned predictor without requiring an a priori bound on the true system’s \gls{rkhs} norm or Lipschitz constant. The method is demonstrated on an obstacle-avoidance task. We also discuss the main limitations of the current analysis, including dimensional scaling and dependence on i.i.d. data.
\end{abstract}

\IEEEpeerreviewmaketitle

\section{Introduction}

Data-driven modeling and control of dynamical systems have attracted significant attention in recent years. In particular, kernel methods have emerged as a powerful tool for learning complex nonlinear relationships from data due to their flexibility and strong theoretical foundations. The use of kernel methods for system identification is not new, and a substantial body of work has developed in this area. The literature can be divided into two sub-branches: one rooted in systems and control theory, e.g., \cite{pillonetto2018,khosravi2023}, and one rooted in machine learning, e.g., \cite{suykens2010}. However, except for the linear case \cite{yin2023}, rigorous uncertainty bounds for identified nonlinear kernel models remain limited.

A related operator-theoretic perspective is provided by the Koopman framework. Koopman operators can be directly approximated in an \gls{rkhs} using methods such as extended dynamic mode decomposition~\cite{Williams2015b}. More recently, pointwise error bounds for the Koopman operator in an \gls{rkhs} that is norm-equivalent to a Sobolev space have been derived~\cite{Kohne2024}. The practical implementation of this method can quickly become infeasible as the state space dimension increases due to the curse of dimensionality, a phenomenon that also appears in our analysis. 

Another prominent approach is to use \glspl{gp} for system identification \cite{kocijan2016,Scampicchio2025}, leveraging their uncertainty quantification to design robust model predictive controllers \cite{Kocijan2004,Bradford2020a}. 
This approach, however, also faces a fundamental limitation, i.e., to derive theoretically sound bounds on the uncertainty, strong assumptions on the system are required, e.g., upper bounds on the \gls{rkhs} norm in the frequentist setting or correctly specified hyperparameters in the Bayesian setting.

Robustly identifying nonlinear systems from data without requiring an a priori upper bound on the true system’s \gls{rkhs} norm or a Lipschitz constant remains a significant challenge. 
In this paper, we address this challenge by combining the scenario approach with the \gls{rkhs} framework. Our main contribution is a systematic procedure based on $n$-widths for constructing an effective finite-dimensional subspace that approximates an infinite-dimensional \gls{rkhs} with arbitrary accuracy. We analyze how this subspace's dimension is influenced by the kernel and the input space. This finite-dimensional representation enables the application of the scenario framework \cite{Campi2018} to derive rigorous robustness guarantees for the identified one-step model without requiring knowledge of the system's \gls{rkhs} norm or its Lipschitz constant. 
 We validate the proposed method by using the identified model to solve an obstacle-avoidance problem in simulation.

 The paper is structured as follows: In \cref{sec:theory}, the necessary theoretical background on \glspl{rkhs} and the scenario approach is provided. In \cref{sec:n_widths}, the concept of $n$-widths is introduced, and bounds on the required dimension of the finite-dimensional subspace are derived. In \cref{sec:application}, a numerical method for estimating the effective dimension is presented, and the approach is applied on a robust system identification problem. Finally, in \cref{sec:conclusion}, the work is summarized.

\section{Theory}
\label{sec:theory}
\subsection{Reproducing Kernel Hilbert Spaces}
We denote by $L^{p}(\Omega,\mu)$, for $1 \leq p < \infty$, the space of equivalence classes $[f]_\sim$ of measurable functions $f : \Omega \to \mathbb{R}$ such that
\[
\|f\|_{p} := \left( \int_{\Omega} |f(x)|^{p} \, d\mu(x) \right)^{1/p} < \infty,
\]
where $[f]_\sim$ is the equivalence class of all functions that agree $\mu$-almost everywhere with $f$. 
For $p = \infty$, we use as usual the supremum norm.

The inclusion operator is defined as
\begin{align*}
    \id \,:\, &H \to L^p_\mu(\Omega)\\
     & f\mapsto [f]_\sim,
\end{align*}
for an arbitrary \gls{nvs} $H$, and is assumed to be continuous.

We define the norm \[\|f\|_{W_m^p} := \left(\sum_{0\leq|\alpha|\leq m} \|D^\alpha f\|_{L^p}^p\right)^{1/p}.\] The Sobolev space $W_p^m(\Omega)$ \cite{Adams2003} is defined as the completion of $C^m(\Omega)$, the space of $m$-times continuously differentiable functions, with respect to the norm $\|\cdot\|_{W_m^p} $. Hence, $W_p^m(\Omega)$ contains functions whose weak derivatives up to order $m$ are in $L^p(\Omega)$.

A \gls{rkhs} $H$ is a Hilbert space of functions $f : \Omega \to \mathbb{R}$ such that the evaluation functional $\delta_x : H \to \mathbb{R}$, defined by $\delta_x(f) = f(x)$, is bounded for all $x \in \Omega$. 
By the Riesz representation theorem, there exists a unique function $k_x \in H$ such that $f(x) = \langle f, k_x \rangle_H$ for all $f \in H$.
The function $k : \Omega \times \Omega \to \mathbb{R}$ defined by $k(x,y) = k_y(x)$ is called the reproducing kernel of $H$. 

\subsection{Scenario Approach and Interval Predictor Models}
We use the scenario framework \cite{Campi2018} to certify probabilistic guarantees on the robustness of the model, given a finite set of samples. We describe the probability space by the triplet $(\Delta,\mathcal{F},\mathbb{P})$. In statistical learning, the true probability measure $\mathbb{P}$ is typically unknown, so we rely on a finite number of samples, $\delta_i$, assumed to be drawn i.i.d. from $\mathbb{P}$. Within this framework, the scenario optimization problems addressed in this manuscript are rewritten in their epigraph form and can be cast as a convex scenario program 
\begin{equation}
    \label{eq:scenario_optimization}
\begin{aligned}
    \min_{\theta}\quad &\bm{c}^\top \theta\\
    \text{s.t. }\quad & g(\theta,\delta_i)\le 0,\ i=1,\dots,N, 
\end{aligned}
\end{equation}
where $g(\cdot,\delta)$ is convex in $\theta$.

$\theta\in\mathbb{R}^{n+1}$ is the decision variable, and the $\delta_i$ are the samples. We denote the optimized decision variable as $\theta^\star$.
The violation probability $V(\theta)$ is defined as
\begin{align*}
    V(\theta):=P\{\delta\in\Delta: g(\theta,\delta)>0\},
\end{align*} 
which is the probability that the variable $\theta$ does not satisfy the constraints for random samples $\delta\in\Delta$. Usually we accept a risk $\varepsilon\in (0,1]$ of violating the feasible set defined in \eqref{eq:scenario_optimization}. For a convex scenario program with decision dimension $n+1$, i.i.d. samples, and a unique optimizer, the solution \(\theta^\star\) satisfies 
\begin{align}
    \label{eq:scenario_guarantee} \mathbb{P}^N\{V(\theta^\star)>\varepsilon\}\le \sum_{i=0}^{n}\binom{N}{i}\varepsilon^i(1-\varepsilon)^{N-i}. \end{align}
Setting a confidence level $\beta\in (0,1)$, we can solve for $N$ which yields the minimum number of samples needed to guarantee that the violation probability is not larger than $\varepsilon$ with a probability of at least $1-\beta$. 

It is clear that, to guarantee small violation probabilities with a high confidence level, the amount of samples $N$ needs to be significantly larger than the dimension of the decision variable $\theta$. More precisely, as shown in \cite{Campi2018} the number of samples can be lower bounded by \begin{align}
    \label{eq:sample_bound}
    N \geq \frac{2}{\varepsilon}\left(n + \ln\left(\frac{1}{\beta}\right)\right)
\end{align}
which depends linearly on $n$. Alternatively, $N$ can be computed by applying a simple bisection which is less conservative compared to \eqref{eq:sample_bound}. 

If two scalar scenario programs are solved independently with risk $\varepsilon$ and confidence parameter $\beta$, then by the union bound the joint violation risk is at most $2\varepsilon$, and the joint confidence is at least $1-2\beta$.

Obtaining \glspl{ipm} corresponds to a specific formulation of the scenario optimization problem. The goal is to find a function $f\in \mathcal{M}_n$, where $\mathcal{M}_n:=\Span\{\phi_1,\ldots,\phi_n\}$ is spanned by $n$ linearly independent functions, while minimizing the maximum error to the training data. The optimization problem for constructing an \gls{ipm} can be formulated as a min-max problem
\begin{align}
    \label{eq:ipm_minmax}
    \min_{f\in\mathcal{M}_n} \max_{i=1,\dots,N} l(f,\delta_i),
\end{align}
where the loss function $l : \mathcal{M}_n \times (\Omega\times Y) \to \mathbb{R}$ is given as $l(f,(z,y)) = |y-f(z)|$ and $\delta_i := (z_i,y_i) \in \Omega\times Y$ is an input-output pair.
Equation \eqref{eq:ipm_minmax} can be reformulated as a scenario optimization problem
\begin{eqnarray}
    \label{eq:ipm_optimization}
    \begin{aligned}
        \min_{\alpha\in\mathbb{R}^n, \gamma \in \mathbb{R}} \quad & \gamma \\
        \text{s.t.} \quad & |y_i-f_\alpha(z_i)| \leq \gamma, \quad i=1,\ldots,N\\
        & f_\alpha = \alpha^\top \Phi,
    \end{aligned}
\end{eqnarray}
where $\alpha = [\alpha_1,\ldots,\alpha_n]^\top$ and $\Phi = [\phi_1,\ldots,\phi_n]^\top$.
Here, $\gamma$ represents the upper bound on the loss $l(f_\alpha, \delta_i)$ over all samples. Applying the previously discussed scenario optimization, the solution $\theta^\star = [\alpha^\star, \gamma^\star]^\top$ provides the parameters for a nominal model $f_{\alpha^\star}$ and a worst-case loss value $\gamma^\star$.

Traditionally, \glspl{ipm} are constructed using a fixed finite-dimensional space $\mathcal{M}_n$. This approach requires prior knowledge to select a suitable model structure. If the chosen parametric family is not rich enough, it may fail to capture the true system dynamics accurately. Non-parametric models, such as those based on \gls{rkhs}, offer a more flexible alternative. Usually, one considers an \gls{rkhs} ball of the form $B_R:=\{f\in H : \|f\|_H \leq R\}$ for some $R>0$. However, the dimension of this ball is infinite. In the next section we will discuss why kernel methods can still be used to construct \glspl{ipm} and how the resulting dimension depends on the choice of the kernel.

\section{Kolmogorov n-widths and Approximation Numbers}
\label{sec:n_widths}
As already mentioned, since $H$ is infinite-dimensional, the question is whether it is possible to find $\mathcal{M}_n\subset H$ that is finite dimensional and can approximate the set $B_R$ sufficiently well. More concretely, are there dominant dimensions in $H$ that contain the majority of the approximation capacity and, if so, how large is the worst-case approximation error when considering an $n-$dimensional subspace composed of exactly these dimensions? These questions directly lead to the so-called $n-$widths. We restrict ourselves in this work to the approximation numbers and Kolmogorov $n-$widths. 

\begin{definition}[Approximation numbers \& n-widths]
    Let $A$ and $B$ be \gls{nvs} such that $\id: A \to B$ is bounded, and let $A_E$ be a ball in $A$. The approximation number of $A_E$ in $B$ is defined as
     \begin{align}
        \label{eq:approximation_numbers}
    a_n(A_E,B) \;=\; \inf_{\substack{T \in \mathcal{L}(A,B)\\ \rank(T)\leq n}} \sup_{f\in A_E} \|f-Tf\|_B,
    \end{align}
    where the infimum is taken over all bounded linear operators $T$ that map from $A$ to $B$ with  $\mathrm{rank}(T)\leq n$.

    The Kolmogorov $n$-width of $A_E$ in $B$ is defined as
    \begin{align}
        \label{eq:K_nwidth}
    d_n(A_E,B) \;=\; \inf_{\substack{B_n \subset B \\ |B_n| = n}} \sup_{f\in A_E} \inf_{g\in B_n} \|f-g\|_B,
    \end{align}
    where the infimum is taken over all subspaces $B_n$ of $B$ with dimension $n$.
\end{definition}

Approximation numbers and the Kolmogorov $n$-width are closely related. Approximation numbers describe the worst case minimum that occurs by applying a low rank transformation. The Kolmogorov $n-$width characterizes the minimal worst-case error in $B$-norm achievable when approximating elements of a set $A_E \subset B$ by $n-$dimensional subspaces of $B$, hence, the approximation takes place in the image space $B$. Comparing both expressions, we can see that $g=Tf$ for \eqref{eq:approximation_numbers} is linear in $f$ whereas in \eqref{eq:K_nwidth} $g\in B_n$. Since the inner infimum with respect to $g$ might depend nonlinearly on $f$ in a general \gls{nvs}, the Kolmogorov $n-$widths are more flexible than the approximation numbers which implies $a_n(A_E,B) \geq d_n(A_E,B)$.

In our setting, $B$ is replaced by $L^q$ and $A_E$ by $B_R$ which denotes the \gls{rkhs} ball with radius $R$. We recall that we want to find a subspace $\mathcal{M}_n \subset H$ such that the worst case error with respect to the $L^q$ norm is minimized, see \cref{fig:$n-$width}. Hence, the usual definition of Kolmogorov $n-$widths needs to be adapted to $B_n \subset A$ in \eqref{eq:K_nwidth}. Indeed, this is a notable restriction of the approximation space such that asymptotic decays from the literature for Kolmogorov $n-$widths as in \cite{Steinwart2017} may not hold. Therefore, the more conservative approximation numbers are considered.

\begin{figure}
    \centering
    \includegraphics[width=\linewidth]{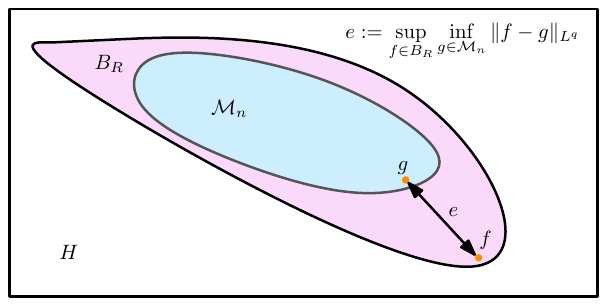}
    \caption{Error governed by the Kolmogorov $n-$widths.}
    \label{fig:$n-$width}
\end{figure}

Instead of merely establishing the existence of subspaces, we adopt a constructive perspective. To this end, we restrict attention to kernels whose native spaces are norm-equivalent to Sobolev spaces. Prominent examples include compactly supported Wendland functions \cite{Wendland2004} as well as the Mat\`ern family of kernels. Moreover, we define $X\subset\Omega$ to be a discrete set with mesh norm 
\[
   h_X := \sup_{x\in\Omega}\min_{x_i\in X}\|x-x_i\|.
\]  
First, we recall a slightly simplified version of Theorem~11.32 from \cite{Wendland2004}. We define the interpolant $P_X f$ as the orthogonal projection onto the space $\Span\{k(x,\cdot), x \in X\}$, i.e., the space spanned by kernel functions centered at the grid points of $X$.

\begin{theorem}[cf. \cite{Wendland2004}, Thm.\ 11.32, integer-order case]
\label{thm:wendland_sobolev}
Let $\Omega\subset\mathbb{R}^d$ be a bounded domain satisfying an interior cone condition. 
Fix $l\in\mathbb{N}$ and $m\in\mathbb{N}_0$, and let $1\leq p < \infty$, $p\leq q \leq \infty$. 
Assume the Sobolev embedding condition \cite{Adams2003} (this ensures that $\id: W_p^l(\Omega) \to W_q^m(\Omega)$ is continuous)
\[
   \begin{cases}
     l > m + d/p, & p>1,\\[0.3em]
     l \geq m + d, & p=1.
   \end{cases}
\]
Then, for any $f\in W^{l}_p(\Omega)$ the error satisfies
\[
  \|f - P_X f\|_{W^m_q(\Omega)} 
  \;\leq\; C\, h_X^{\,l-m - d\,(1/p - 1/q)_+}\; \|f\|_{W^{l}_p(\Omega)},
\]
where $(\alpha)_+ := \max\{\alpha,0\}$, $h_X$ sufficiently small, and $C>0$ is a constant independent of $f$ and $X$.
\end{theorem}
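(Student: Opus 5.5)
This is the classical sampling-inequality argument behind Wendland's Theorem 11.32. The plan combines three ingredients: a zeros lemma for Sobolev functions, the vanishing of the interpolation residual on $X$, and the stability of the orthogonal projection in the native space.

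\emph{Step 1: the residual vanishes on $X$.} Set $u := f - P_X f$. Since $P_X$ is the $H$-orthogonal projection onto $\Span\{k(x,\cdot), x\in X\}$, the reproducing property gives
\[
u(x_i) = \langle f - P_X f, k(x_i,\cdot)\rangle_H = 0 \quad \text{for all } x_i\in X .
\]
\emph{Step 2: stability of the projection.} Because $H$ is norm-equivalent to $W_2^{\tau}(\Omega)$, we have $\|u\|_H\le\|f\|_H$. Hence $\|u\|_{W^l_p}$ is controlled by a constant times the norm of $f$. In the Hilbert case $p=2$ with $l=\tau$ this is immediate. For general $p$, I would take the statement as Wendland does, with $f$ in the native space and $\|f\|_{W_p^l}$ understood through the norm equivalence. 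I would flag this identification as an assumption implicit in the simplified statement.

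\emph{Step 3: the sampling inequality.} It remains to show that any $u\in W^l_p(\Omega)$ vanishing on $X$ satisfies
\[
\|u\|_{W^m_q(\Omega)}\le C h_X^{\,l-m-d(1/p-1/q)_+}\|u\|_{W^l_p(\Omega)}
\]
for small $h_X$. I would prove this locally and then patch.

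First, using the interior cone condition, cover $\Omega$ by star-shaped subdomains $D_j$ of diameter $\sim h_X$, each containing a ball of radius $\sim h_X$. The covering should have bounded overlap, with the overlap constant depending only on $d$ and the cone parameters.

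Next, on each $D_j$, take the averaged Taylor polynomial $p_j$ of degree $<l$ (Bramble--Hilbert). It satisfies
\[
|u-p_j|_{W^k_q(D_j)}\le C h^{\,l-k-d(1/p-1/q)}|u|_{W^l_p(D_j)} .
\]
The embedding condition ensures that $u$ is continuous, so this estimate also holds for point evaluations with $k=0$, $q=\infty$.

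Then control $p_j$ itself. Since $h_X$ is small relative to the cone radius, each $D_j$ contains a set of points of $X$ that is unisolvent for polynomials of degree $<l$, together with a local polynomial reproduction with uniformly bounded Lebesgue constant. By norm equivalence on the finite-dimensional polynomial space, transported by scaling, $\|p_j\|_{L^\infty(D_j)}\le C\max_{x_i\in X\cap D_j}|p_j(x_i)|$. Because $u(x_i)=0$, we have $|p_j(x_i)|=|(p_j-u)(x_i)|$, which is bounded by the previous estimate. Markov's inequality on $D_j$ then bounds the derivatives $|p_j|_{W^k_q(D_j)}$.

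Finally, sum the local bounds over $j$ using the bounded overlap. Combining the $\ell^q$ sum with the $\ell^p$ sum via $\|\cdot\|_{\ell^q}\le\|\cdot\|_{\ell^p}$ for $q\ge p$ produces the factor $(1/p-1/q)_+$. Summing $k=0,\dots,m$ and taking the leading power of $h_X\le1$ gives the exponent $l-m-d(1/p-1/q)_+$.

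\emph{Main obstacle.} The hardest part of Step 3 is the uniform local polynomial reproduction. The constant $C$ must be independent of $X$ and of $j$, which requires a careful choice of the cell radius as a fixed multiple of $h_X$, as in Wendland's Ch.~3 norming-set argument. I would import that result rather than reprove it. The $p=1$ borderline case $l\ge m+d$ needs the endpoint embedding $W^d_1\hookrightarrow C$, and otherwise follows the same scheme.
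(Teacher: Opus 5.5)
Your Step 3 is the Narcowich--Ward--Wendland zeros lemma, and you argue it the same way as the source the paper cites: averaged Taylor polynomials on a bounded-overlap covering by cells of size $\sim h_X$, norming sets from the interior cone condition, Markov's inequality, and $\ell^p\hookrightarrow\ell^q$ for the summation. Steps 1--2 are the standard reduction of the interpolation error to that lemma. The paper gives no proof beyond the citation, so this is essentially the same route. It is correct for $p=2$ with $l$ equal to the Sobolev order $\tau$ of the native space, which is the only case the paper's Corollary uses.

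What you should fix is your treatment of general $p$ in Step 2, because Wendland does not do what you describe. For $p\neq 2$ there is no norm equivalence between $H$ and $W_p^l(\Omega)$, so the stability bound $\|f-P_Xf\|_{W_p^l}\lesssim\|f\|_{W_p^l}$ is not available. In Wendland, the general-$p$ result (Thm.~11.32) is only the sampling inequality for functions with $u|_X=0$. The interpolation estimate derived from it is stated only for $p=2$, $l=\tau$.

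The point you flagged therefore cannot be carried as an implicit assumption. Read literally, with the kernel unspecified and $l$ arbitrary, the displayed bound is false, because kernel interpolation saturates. For example, take $k(x,y)=e^{-|x-y|}$ on $[0,1]$, whose native space is $W_2^1$. The interpolant satisfies $s''=s$ between nodes, so for $f(x)=x^2$ and equispaced nodes the sup-error is of exact order $h^2$. The statement, with $p=2$, $m=0$, $q=\infty$, $d=1$, would instead give $h^{l-1/2}$ for every $l$.

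State the result as two separate pieces. The first is the zeros lemma for general $p\le q$. The second is the interpolation bound for $p=2$, $l=\tau$, obtained via $\|u\|_{W_2^\tau}\lesssim\|u\|_H\le\|f\|_H\lesssim\|f\|_{W_2^\tau}$.
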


The definition of interior cone conditions can be found in \cite{Wendland2004} definition 3.6. Loosely speaking, it ensures that every point in $\Omega$ can be embedded into a cone, with a fixed angle and radius, which is itself fully contained in $\Omega$. The resulting angle and radius depend on the domain under consideration. Note that the interior cone condition is not very restrictive and holds for many domains, e.g., hypercubes, hyperspheres. In the following we will use Theorem~\ref{thm:wendland_sobolev} to derive bounds on the approximation numbers of kernel native spaces.

\begin{corollary}
    \label{thm:approximation_numbers_sobolev}
    Let $H$ be an \gls{rkhs} which is norm-equivalent to the Sobolev space $W_2^l(\Omega)$ for some $l\in\mathbb{N},\, l> d/2$, where $\Omega\subset\mathbb{R}^d$ is a compact domain satisfying an interior cone condition. Furthermore, let $B_R := \{f\in H: \|f\|_H \leq R\}$. Then, for $2\leq q \leq \infty$ there exists a constant $C > 0$ such that
     \[ a_n(B_R,L^q) \leq C\, n^{\,-l/d + (1/2 - 1/q)_+}.\]
\end{corollary}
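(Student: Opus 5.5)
We apply Theorem~\ref{thm:wendland_sobolev} with $p=2$ and $m=0$. The operator we use is the interpolation projection $T = P_X$, which is linear. Its range is $\Span\{k(x,\cdot): x\in X\}$, so $\rank(T)\le |X|$.

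First we check that $T$ is admissible in the definition of $a_n$. Since $P_X$ is an orthogonal projection in $H$, it is bounded on $H$. Because $H$ embeds continuously into $W_2^l(\Omega)$, and $W_2^l(\Omega)$ embeds continuously into $L^q$ (Sobolev embedding, using $l>d/2$), the composition $\id\circ P_X$ is a bounded operator $H\to L^q$ of rank at most $|X|$. Hence $a_{|X|}(B_R,L^q)\le \sup_{f\in B_R}\|f-P_Xf\|_{L^q}$.

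Next we bound this supremum. By the norm equivalence there is $c_1$ with $\|f\|_{W_2^l}\le c_1\|f\|_H\le c_1R$ for $f\in B_R$. We also use $\|\cdot\|_{L^q}=\|\cdot\|_{W_q^0}$. The hypotheses of Theorem~\ref{thm:wendland_sobolev} hold with $p=2\le q\le\infty$: the condition reads $l>0+d/2$, which is assumed. The theorem then gives
\[
\sup_{f\in B_R}\|f-P_Xf\|_{L^q}\le C c_1 R\, h_X^{\,l-d(1/2-1/q)_+}.
\]

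The remaining step is to relate $h_X$ to $n=|X|$. We want quasi-uniform point sets with $h_X\le c_2 n^{-1/d}$. Since $\Omega$ is compact, it lies in a cube $[a,b]^d$. Take a tensor grid with $k$ points per axis and spacing $(b-a)/k$. Every point of the cube, and hence of $\Omega$, is within $\sqrt d\,(b-a)/k$ of some grid node.

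Grid nodes may lie outside $\Omega$, and interpolation needs $X\subset\Omega$. We handle this by replacing each grid cell that meets $\Omega$ with one point of $\Omega$ inside that cell. This at most doubles the distance bound and does not increase the count, so $|X|\le k^d$ and $h_X\le 2\sqrt d\,(b-a)/k$.

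For a given $n\ge1$, choose $k=\lfloor n^{1/d}\rfloor$, so $|X|\le n$ and $h_X\lesssim n^{-1/d}$. Substituting gives
\[
a_n(B_R,L^q)\le a_{|X|}(B_R,L^q)\le C' n^{-l/d+(1/2-1/q)_+},
\]
where we used that $a_n$ is nonincreasing in $n$ and $|X|\le n$.

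The main obstacle is the requirement "$h_X$ sufficiently small" in Theorem~\ref{thm:wendland_sobolev}, that is, $h_X\le h_0$. This only covers $n\ge n_0$. For $n<n_0$ we use the trivial bound with $T=0$: $a_n\le \sup_{B_R}\|f\|_{L^q}\le c_3R$, which is finite by the embedding. This can be absorbed by enlarging the constant to $\max\{C',\,c_3R\,n_0^{\,l/d-(1/2-1/q)_+}\}$, since the exponent $-l/d+(1/2-1/q)_+$ is negative. It is negative because $l/d>1/2\ge(1/2-1/q)_+$.

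The constant $C$ in the final bound depends on $R$, $\Omega$, $l$, $d$, $q$ and the kernel, but not on $n$.
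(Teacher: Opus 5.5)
Your proposal is correct and follows the same route as the paper: you bound $a_n$ by the worst-case interpolation error of $P_X$ on a quasi-uniform point set, apply Theorem~\ref{thm:wendland_sobolev} with $p=2$, $m=0$, and use $h_X\le C_2 n^{-1/d}$. You also make explicit several points the paper's proof glosses over, which strengthens the argument: the norm-equivalence constant, the construction of $X\subset\Omega$ with $|X|\le n$, the correct power of the mesh constant, and the small-$n$ regime where $h_X$ is not yet below the theorem's threshold.
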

\begin{proof}
    \begin{align*}
        a_n(B_R,L^q) &= \inf_{\substack{T \in \mathcal{L}(H,L^q)\\ \rank(T)\leq n}} \sup_{f\in B_R} \|f-Tf\|_{L^q}\\
        &\leq \sup_{f\in B_R} \|f-P_X f\|_{L^q}\\
        & = \sup_{f\in B_R} \|\id: W_q^0(\Omega) \to L^q(\Omega)\| \|f-P_X f\|_{W_q^0}\\
        & =  \sup_{f\in B_R} \|f-P_X f\|_{W_q^0}\\
        &\leq C_1\, h_X^{\,l - d\,(1/2 - 1/q)_+}\; R
    \end{align*}
    which follows by applying Theorem~\ref{thm:wendland_sobolev} with $p=2$ and $m=0$. The last step follows from the fact that using a quasi-uniform grid there is $C_2 > 0$ such that $h_X \leq C_2 n^{-1/d}$. Hence, we have $C = C_1\, C_2\, R$.
\end{proof}

\begin{remark}
    Since the derivation of Theorem~\ref{thm:wendland_sobolev} is constructive, one may ask whether the decay for the approximation numbers is optimal. Indeed, \cite{Steinwart2017} proved that the approximation numbers of Sobolev spaces satisfy $a_n(W_2^l(\Omega),L^q) \asymp n^{-l/d + (1/2 - 1/q)}$ for $q\in[2,\infty]$ which matches the decay rate in Corollary~\ref{thm:approximation_numbers_sobolev}. This implies asymptotic optimality of Theorem~\ref{thm:wendland_sobolev}. However, there may be fast-decaying terms that dominate for small $n$. Hence, the bound in Corollary~\ref{thm:approximation_numbers_sobolev} is not necessarily tight.
\end{remark}
\begin{remark}
    Corollary~\ref{thm:approximation_numbers_sobolev} can be extended to \glspl{rkhs} with smooth native spaces, e.g., Gaussian kernels. 
    In this case, the approximation numbers decay super-exponentially, i.e., $a_n(B_R,L^q) \leq C \exp(-c n^{1/d})$ for some $c > 0$, see \cite{Wendland2004} Theorem~11.22 with $C$ defined analogously.
\end{remark}
Note that the approximation numbers suffer from the curse of dimensionality for $q > 2$. Substituting $l = \nu + d/2$ in the statement of Corollary~\ref{thm:approximation_numbers_sobolev} we have $a_n(B_R,L^q) \leq C n^{-\nu/d + (1/2 - 1/q)_+}$. Hence, $\nu$ needs to grow at least linearly with $d$ to compensate.

\section{Application}
\label{sec:application}
Even though the bounds derived in Corollary~\ref{thm:approximation_numbers_sobolev} are not always useful in practice, they show how the order of required samples scales with the dimension of the input space as well as the smoothness of the kernel. In this section, we show that approximation numbers can be estimated using numerical data. We then apply the results to a robust system identification problem. All code is available on Zenodo \cite{lubsen_2025_code}.

\subsection{Greedy Basis Selection via the Power Function}
We consider the approximation numbers $a_n(B_R,L^\infty)$ and define a tolerance on the approximation error $\tau > 0$. Recall that the goal is to find the smallest $n$ such that
\begin{align*}
      \inf_{\substack{T \in \mathcal{L}(H,L^\infty)\\ \rank(T)\leq n}} \sup_{f\in B_R} \|f-Tf\|_{L^\infty} \leq \tau,
\end{align*}
where $B_R$ is defined as in the previous section. Let $P_Z$ be a projection operator onto the subspace spanned by $\{k(\cdot,z_i): z_i \in Z \text{ and } |Z| = n\}$. From the multi-armed bandit and safe Bayesian optimization literature, e.g., \cite{chowdhury2017,luebsen2025}, we have that 
\begin{align}
    \label{eq:min_power_fun}
    |f(x)-P_Z f(x)| \leq R \,\mathcal{P}_Z(x), \quad \forall x\in\Omega,\, \forall f \in B_R,
\end{align}
where $\mathcal{P}_Z(x) = \sqrt{k(x,x) - k(Z,x)^\top K_Z^{-1} k(Z,x)}$ is the power function, $k(Z,x) = [k(z_1,x),\dots,k(z_n,x)]^\top$ and $[K_Z]_{ij} = k(z_i,z_j)$. In the context of Bayesian optimization, the power function is also known as the posterior standard deviation of a Gaussian process. However, note that although the two functions look identical, they have completely different interpretations.

In view of \eqref{eq:min_power_fun}, to achieve a tolerance $\tau$ the goal is to find a set $Z$ such that $\inf_{Z\in\Omega^n}\sup_{x\in\Omega} \mathcal{P}_Z(x) \leq \tau/R$. This is a highly non-convex min-max problem. To simplify computations, $\Omega$ is discretized into a set $X_M=[x_1,\dots,x_M]$. Hence, we replace the continuous-domain $\Omega$ by the discrete domain $X_M$ such that
\begin{align}
    \label{eq:opt_problem}
    \inf_{Z \in \Omega^n}\max_{x\in X_M} \mathcal{P}_Z(x) \leq \tau/R.
\end{align}
This problem can be solved using a P-greedy algorithm \cite{binev2011} as given in Algorithm~\ref{algo:p_greedy}. The algorithm returns the reduced basis defined as $\mathcal{M}_n:= \text{span}\{k(\cdot,z_i) : z_i \in Z\}$ with $n = |Z|$. In contrast to \eqref{eq:opt_problem}, the tolerance $\tau$ is used as a stopping criterion instead of $n$.

\begin{algorithm}[t]
\caption{P-greedy algorithm for estimating the effective dimension}
\label{algo:p_greedy}
\KwIn{$X$, kernel $k$, tolerance $\mathrm{tol}$}
\KwOut{Basis $Z$}

$Z\leftarrow \emptyset$\; $j \leftarrow 0$\;
\While{1}{
    $z \leftarrow \arg\max_{x\in X} P_Z(x)^2$\;
    \If{$P_Z(z)^2<\mathrm{tol}$}{\textbf{break}}
    $Z \leftarrow Z\cup\{z\}$\; $j \leftarrow j+1$\;
}
\end{algorithm}

\subsection{Robust System Identification using RKHS}
\label{sec:robust_system_identification}
In this section, we show how the previous complexity analysis of \glspl{rkhs} can be used in practice jointly with the scenario approach.
We will show this in terms of a toy example, the Van der Pol oscillator, which is given by
\begin{align*}
    \dot{x}_1 &= x_2\\
    \dot{x}_2 &= (1-x_1^2)x_2 - x_1 - 2 + 2u.
\end{align*}
We define the state vector as $\x=[x_1,x_2]^\top$. After discretizing the continuous-time system, we obtain the discrete-time state-space model $\x_{k+1} = f(\x_k,u_k) + \bm{\omega}_k$, where $\bm{\omega}_k$ denotes process noise.
Using this model, we generate the dataset by randomly sampling $\x_k$ and $u_k$ from $\Omega$ and computing $\x_{k+1}$. The input and output data points are given as $\z_k = [\x_k^\top,u_k]^\top$, and $\y_k = \x_{k+1}$, respectively. After computing the required number of basis functions via Algorithm~\ref{algo:p_greedy}, $N$ is computed via bisection of the right side of \eqref{eq:scenario_guarantee}.

The goal is to design a control input $u$ that steers the system from an initial state $\x_0 = [4,0]^\top$ to the end point $\x_f = [-2,0]^\top$ while avoiding a star-shaped obstacle centered at $[0,-4]$, see \cref{fig:van_der_pol_obstacle}.
The remaining settings are specified in \cref{tab:vdp_setup}.
 
The scenario optimization problem for each state $y^{(l)}$ can be formulated as
\begin{align}
    \label{eq:scenario_opti} \nonumber
    \min_{\alpha_l \in \mathbb{R}^n, \gamma_l \in \mathbb{R}}& \gamma_l\\
    \text{s.t. }\quad & \alpha_l^\top K_Z \alpha_l \leq R^2,\\ \nonumber
    & |y^{(l)}_i - \alpha_l^\top k(Z,\z_i)| \leq \gamma_l, \quad \forall\,i=1,\dots,N.
\end{align}
Taking the union bound over both states yields a total violation risk of at most $
0.05$ and confidence of at least $1-2\cdot 10^{-6}$. In \cref{tab:vdp_setup}, we can see that the resulting radii $\gamma_1$ and $\gamma_2$ are close to the third standard deviation $3\,\sigma_\mathrm{noise}$, which includes 99.7\% of the probability mass.

\begin{figure}
    \centering
    \input{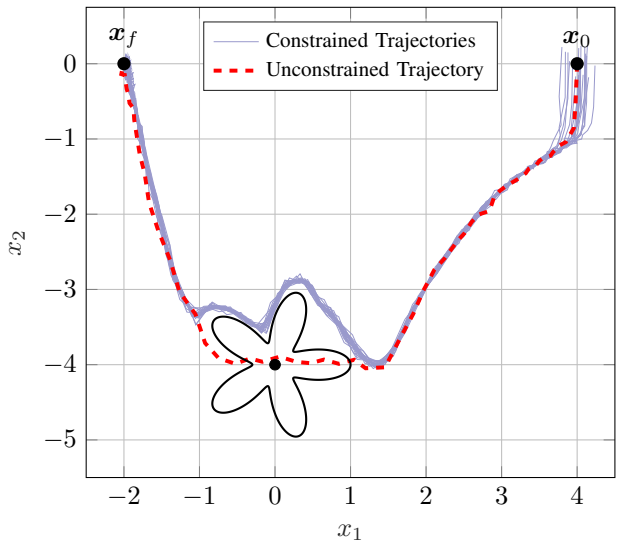}
    \caption{Obstacle avoidance example using a robust \gls{rkhs} model of the Van der Pol oscillator. The red line indicates the optimal trajectory without obstacle avoidance. The blue lines are optimal trajectories with obstacle avoidance and sampled initial states around $x_0$.}
    \label{fig:van_der_pol_obstacle} 
\end{figure}

\cref{fig:van_der_pol_obstacle} shows the results of the robust system identification and control. The robust model predictive controller is implemented using Acados~\cite{Verschueren2021}, which operates with a prediction horizon of 30 steps and solves each optimization in approximately (0.4 - 0.7)\,ms on a standard notebook with an Intel® Core™ i7-8565U CPU. The initial state for each of the 20 blue trajectories is sampled around $\x_0$. Moreover, the tube radii are propagated over the horizon by recursively evaluating the model at the four corners in the state space, i.e., $\y_k + [\pm\gamma_1,\pm\gamma_2]^\top$. Of course, this does not rigorously ensure that the whole MPC rollout is safe. As demonstrated in \cref{fig:van_der_pol_obstacle}, the controller successfully steers the system to the end point while respecting all constraints.

\begin{table}[t]
\centering
\caption{Setup for the Van der Pol example.}
\label{tab:vdp_setup}
\renewcommand{\arraystretch}{1.15}
\begin{tabular}{p{4.cm}p{3.cm}}
\toprule
\textbf{Quantity} & \textbf{Value} \\
\midrule
Sampling domain & $\Omega :=[-5,5]^3$ \\
Kernel & Mat\'ern with $\nu = 5/2$ \\
Noise & $\bm{\omega}_k = \mathcal{N}(0,\sigma_\mathrm{noise}^2 I_2)$\\
Noise standard deviation & $\sigma_\mathrm{noise}=0.02$\\
Tube sizes & $\gamma_1 \approx 0.057$, $\gamma_2\approx 0.068$\\
Sampling time & $T_s = 0.1$\\
Tolerance & $\tau = 0.1$ \\
RKHS norm bound & $R = 350$ \\
Risk & $\epsilon = 0.025$ \\
Confidence & $\beta = 10^{-6}$ \\
Number of scenarios & $N = 4200$ \\
Dataset & $\Delta = \{(\z_i,\y_i)\}_{i=1}^N$ \\
Number of basis functions & $n = 60$ \\
Optimization variables & $\alpha_l \in \mathbb{R}^n,\ \gamma_l \in \mathbb{R}$ \\
\bottomrule
\end{tabular}
\end{table}

\subsection{Discussion}
In the introduction, we claimed that this approach, unlike methods based on uncertainty bounds for support vector regression, does not rely on strong assumptions about the unknown system, such as its \gls{rkhs} norm. Indeed, the norm bound $R$ in \eqref{eq:scenario_optimization} only restricts the hypothesis set to an \gls{rkhs} ball. It does not claim to contain the true system. 
The radius $\gamma_l$ obtained from \eqref{eq:scenario_optimization} depends on $R$, e.g., if $R$ is too small, the resulting radius $\gamma_l$ will be large because the function class is not allowed to vary too much. Nevertheless, the theoretical guarantees for the constructed tube will hold.

In contrast to the presented approach where we have defined the optimization variables a priori, the a posteriori scenario framework \cite{campi2018} can be used to bound the violation probability using the number of active support constraints that are observed after solving the optimization problem. Combined with an $l_1$-regularization term in the optimization problem, this can further reduce the number of active support constraints and hence the violation probability. However, as shown in \cref{thm:approximation_numbers_sobolev}, we expect that the number of active support constraints scales with the dimension of the input space, which implies that huge Gram matrices need to be stored and evaluated. This is a major bottleneck and quickly leads to memory issues. This is exactly the reason why the P-greedy algorithm is applied on an extra training set, in order to compress the data. Since the current theoretical guarantees do not allow the same training set to be used twice, there are two ways to obtain an additional dataset. The first option is to sample from some space that contains the system's manifold, e.g. a hypercube. The second option is to sample directly from the system’s trajectories, i.e., by running the system and collecting data. Of course, the first approach is easier to implement, but it may lead to a larger number of support vectors due to the fact that the samples are not restricted to the system's manifold. The second approach provides a better estimate, since the effective dimension is lower, but it requires running the system and collecting data, which may be costly.
To improve data efficiency, we plan to investigate compression-based learning approaches \cite{campi2024,rocchetta2024} in future work. These approaches can reduce the number of support constraints beforehand.

Another major bottleneck in dynamical-system applications is the i.i.d. requirement on the data. In practice, of course, this is barely satisfied. Usually, one assumes that the underlying dynamical system mixes over time, which permits subsampling from trajectories.
\section{Conclusion}
\label{sec:conclusion}
In this paper, we presented a new method for nonlinear system identification with finite-sample one-step violation guarantees within an \gls{rkhs} framework. Our approach successfully bridges the scenario approach with approximation theory to provide these guarantees without requiring a correct a priori upper bound on the true system's \gls{rkhs} norm. We have shown that the required amount of data to achieve a desired approximation accuracy increases exponentially with the input dimension. Moreover, the i.i.d. assumption on the data is not satisfied in practice, which further complicates the application of the scenario approach. Future work will focus on addressing these limitations by exploring compression-based methods to improve memory efficiency and by investigating extensions/alternatives to the scenario approach that can handle dependent data.



\printbibliography

@book{Adams2003,
  title = {Sobolev Spaces},
  author = {Adams, Robert A. and Fournier, John J. F.},
  date = {2003},
  series = {Pure and Applied Mathematics},
  edition = {2nd ed},
  number = {v. 140},
  publisher = {Academic Press},
  location = {Amsterdam Boston},
  isbn = {978-0-12-044143-3},
  langid = {english}
}

@article{Bradford2020a,
  title = {Stochastic Data-Driven Model Predictive Control Using Gaussian Processes},
  author = {Bradford, Eric and Imsland, Lars and Zhang, Dongda and Del Rio Chanona, Ehecatl Antonio},
  date = {2020-08},
  journaltitle = {Computers \& Chemical Engineering},
  volume = {139},
  % pages = {106844},
  % issn = {00981354},
  doi = {10.1016/j.compchemeng.2020.106844},
  % url = {https://linkinghub.elsevier.com/retrieve/pii/S0098135419313080},
  langid = {english}
}

@article{pillonetto2018,
title = {System Identification Using Kernel-Based Regularization: New Insights on Stability and Consistency Issues},
journal = {Automatica},
volume = {93},
pages = {321-332},
year = {2018},
% issn = {0005-1098},
doi = {10.1016/j.automatica.2018.03.065},
% url = {https://www.sciencedirect.com/science/article/pii/S0005109818301602},
author = {Gianluigi Pillonetto},
% keywords = {Learning from examples, System identification, Reproducing kernel Hilbert spaces of dynamic systems, Kernel-based regularization, BIBO stability, Regularization networks, Generalization and consistency},
}

@ARTICLE{yin2023,
  author={Yin, Mingzhou and Smith, Roy S.},
  journal={IEEE Control Systems Letters}, 
  title={Error Bounds for Kernel-Based Linear System Identification With Unknown Hyperparameters}, 
  year={2023},
  volume={7},
  number={},
  % pages={2491-2496},
  % keywords={Kernel;Linear systems;Stochastic processes;System identification;Standards;Probabilistic logic;Numerical models;Identification;statistical learning;machine learning;uncertain systems},
  doi={10.1109/LCSYS.2023.3287305}
  }

@article{suykens2010,
author = {Johan A.K. Suykens and Carlos Alzate and Kristiaan Pelckmans},
title = {{Primal and Dual Model Representations in Kernel-Based Learning}},
volume = {4},
journal = {Statistics Surveys},
publisher = {Amer. Statist. Assoc., the Bernoulli Soc., the Inst. Math. Statist., and the Statist. Soc. Canada},
% pages = {148 -- 183},
year = {2010},
doi = {10.1214/09-SS052},
% URL = {https://doi.org/10.1214/09-SS052}
}

@article{khosravi2023,
title = {The Existence and Uniqueness of Solutions for Kernel-Based System Identification},
journal = {Automatica},
volume = {148},
pages = {110728},
year = {2023},
% issn = {0005-1098},
doi = {10.1016/j.automatica.2022.110728},
% url = {https://www.sciencedirect.com/science/article/pii/S0005109822005945},
author = {Mohammad Khosravi and Roy S. Smith},
% keywords = {System identification, Kernel-based methods, Existence and uniqueness of solution, Integrable kernels},
}

@book{kocijan2016,
  author = {Juš Kocijan},
  editor = {},
  publisher = {Springer Cham},
  title = {Modelling and Control of Dynamic Systems Using Gaussian Process Models},
  year = {2016},
  isbn = {978-3-319-21021-6},
  doi = {10.1007/978-3-319-21021-6},
}

@book{Campi2018,
  title = {Introduction to the {{Scenario Approach}}},
  author = {Campi, Marco C. and Garatti, Simone},
  date = {2018-11-05},
  publisher = {{Society for Industrial and Applied Mathematics}},
  location = {Philadelphia, PA},
  doi = {10.1137/1.9781611975444},
  % url = {https://epubs.siam.org/doi/book/10.1137/1.9781611975444},
  isbn = {978-1-61197-543-7 978-1-61197-544-4},
  langid = {english}
}

@inproceedings{Kocijan2004,
  title = {Gaussian Process Model Based Predictive Control},
  booktitle = {Proceedings of the 2004 {{American Control Conference}}},
  author = {Kocijan, J. and Murray-Smith, R. and Rasmussen, C.E. and Girard, A.},
  date = {2004},
  % pages = {2214-2219 vol.3},
  publisher = {IEEE},
  location = {Boston, MA, USA},
  doi = {10.23919/ACC.2004.1383790},
  % url = {https://ieeexplore.ieee.org/document/1383790/},
  eventtitle = {Proceedings of the 2004 {{American Control Conference}}},
  isbn = {978-0-7803-8335-7},
  langid = {english}
}

@online{Kohne2024,
  title = {${L^\infty}$-Error Bounds for Approximations of the {{Koopman}} Operator by Kernel Extended Dynamic Mode Decomposition},
  author = {Köhne, Frederik and Philipp, Friedrich M. and Schaller, Manuel and Schiela, Anton and Worthmann, Karl},
  date = {2024-07-04},
  eprint = {2403.18809},
  eprinttype = {arXiv},
  eprintclass = {math},
  doi = {10.48550/arXiv.2403.18809},
  % url = {http://arxiv.org/abs/2403.18809},
  langid = {english},
  pubstate = {prepublished},
  % keywords = {Computer Science - Numerical Analysis,Mathematics - Dynamical Systems,Mathematics - Numerical Analysis}
}

@online{Scampicchio2025,
  title = {Gaussian Processes for Dynamics Learning in Model Predictive Control},
  author = {Scampicchio, Anna and Arcari, Elena and Lahr, Amon and Zeilinger, Melanie N.},
  date = {2025-02-04},
  eprint = {2502.02310},
  eprinttype = {arXiv},
  eprintclass = {eess},
  doi = {10.48550/arXiv.2502.02310},
  % url = {http://arxiv.org/abs/2502.02310},
  langid = {english},
  pubstate = {prepublished},
  % keywords = {Computer Science - Systems and Control,Electrical Engineering and Systems Science - Systems and Control}
}

@article{binev2011,
author = {Binev, Peter and Cohen, Albert and Dahmen, Wolfgang and DeVore, Ronald and Petrova, Guergana and Wojtaszczyk, Przemyslaw},
title = {Convergence Rates for Greedy Algorithms in Reduced Basis Methods},
journal = {SIAM Journal on Mathematical Analysis},
volume = {43},
number = {3},
% pages = {1457-1472},
year = {2011},
doi = {10.1137/100795772},

% URL = { 
%         https://doi.org/10.1137/100795772
% },
% eprint = { 
    
%         https://doi.org/10.1137/100795772},
}

@misc{campi2024,
      title={Compression, Generalization and Learning}, 
      author={Marco C. Campi and Simone Garatti},
      year={2024},
      eprint={2301.12767},
      archivePrefix={arXiv},
      primaryClass={cs.LG},
      url={https://arxiv.org/abs/2301.12767}, 
}

@ARTICLE{rocchetta2024,
  author={Rocchetta, Roberto and Mey, Alexander and Oliehoek, Frans A.},
  journal={IEEE Transactions on Neural Networks and Learning Systems}, 
  title={A Survey on Scenario Theory, Complexity, and Compression-Based Learning and Generalization}, 
  year={2024},
  volume={35},
  number={12},
  pages={16985-16999},
  % keywords={Data models;Complexity theory;Numerical models;Statistical learning;Decision making;Support vector machine classification;Picture archiving and communication systems;Agnostic learning;compression;generalization theory;probably approximately correct (PAC);scenario optimization;support vector classifiers},
  doi={10.1109/TNNLS.2023.3308828}}

@misc{luebsen2025,
      title={An Analysis of Safety Guarantees in Multi-Task Bayesian Optimization}, 
      author={Jannis Luebsen and Annika Eichler},
      year={2025},
      eprint={2503.08555},
      archivePrefix={arXiv},
      primaryClass={cs.LG},
      url={https://arxiv.org/abs/2503.08555}, 
}

@misc{chowdhury2017,
      title={On Kernelized Multi-armed Bandits}, 
      author={Sayak Ray Chowdhury and Aditya Gopalan},
      year={2017},
      eprint={1704.00445},
      archivePrefix={arXiv},
      primaryClass={cs.LG},
      url={https://arxiv.org/abs/1704.00445}, 
}

@article{Steinwart2017,
  title = {A Short Note on the Comparison of Interpolation Widths, Entropy Numbers, and {{Kolmogorov}} Widths},
  author = {Steinwart, Ingo},
  date = {2017-03},
  journaltitle = {Journal of Approximation Theory},
  volume = {215},
  % pages = {13--27},
  % issn = {00219045},
  doi = {10.1016/j.jat.2016.11.006},
  % url = {https://linkinghub.elsevier.com/retrieve/pii/S0021904516301071},
  langid = {english}
}

@article{Verschueren2021,
  title = {Acados – a Modular Open-Source Framework for Fast Embedded Optimal Control},
  author = {Verschueren, Robin and Frison, Gianluca and Kouzoupis, Dimitris and Frey, Jonathan and family=Duijkeren, given=Niels, prefix=van, useprefix=true and Zanelli, Andrea and Novoselnik, Branimir and Albin, Thivaharan and Quirynen, Rien and Diehl, Moritz},
  date = {2021},
  journaltitle = {Mathematical Programming Computation}
}

@book{Wendland2004,
  title = {Scattered {{Data Approximation}}},
  author = {Wendland, Holger},
  date = {2004},
  series = {Cambridge {{Monographs}} on {{Applied}} and {{Computational Mathematics}}},
  isbn = {978-0-521-84335-5},
  langid = {english}
}

@article{Williams2015b,
  title = {A {{Data}}–{{Driven Approximation}} of the {{Koopman Operator}}: {{Extending Dynamic Mode Decomposition}}},
  author = {Williams, Matthew O. and Kevrekidis, Ioannis G. and Rowley, Clarence W.},
  date = {2015-12-01},
  journaltitle = {Journal of Nonlinear Science},
  volume = {25},
  number = {6},
  % pages = {1307--1346},
  % issn = {1432-1467},
  doi = {10.1007/s00332-015-9258-5},
  % url = {https://doi.org/10.1007/s00332-015-9258-5}
}

@software{lubsen_2025_code,
  author       = {Luebsen, Jannis and
                  Eichler, Annika},
  title        = {code-Robust-Nonlinear-System-
                   Identification-using-Reproducing-Kernel-Hilbert-
                   Spaces: Code for Final Submission },
  month        = apr,
  year         = 2026,
  publisher    = {Zenodo},
  version      = {v1.0},
  doi          = {10.5281/zenodo.19438697},
}

\end{document}